\theoremstyle{definition} 
\theoremstyle{definition} 
\newtheorem {theorem} {Theorem}
\newcommand{\kb}[1]{\mathbf{\left[#1\right]}}
\newcommand{\trd}[1]{\left|\left| #1 \right| \right|}
\newcommand{\st}{\text{ } : \text{ }}
\newcommand{\Hmin}{H_\infty}
\newcommand{\Hmax}{H_{max}}
\newcommand{\Hextd}{\bar{H}}
\newcommand{\hw}{w}
\newcommand{\alphabet}{\mathcal{A}}
\newcommand{\W}{\mathcal{W}}
\newcommand{\Z}{\mathcal{Z}}
\begin{document}
\title{Semi-Source Independent Quantum Walk Random Number Generation}

%%% Several authors with up to three affiliations:
\author{%
    \IEEEauthorblockN{Minwoo Bae and Walter O. Krawec}
  \IEEEauthorblockA{University of Connecticut\\
                    Department of Computer Science and Engineering\\
                    Storrs, CT, USA 06269\\
                    Email: walter.krawec@uconn.edu}
}

\maketitle

\begin{abstract}
Semi-source independent quantum random number generators (SI-QRNG) are cryptographic protocols which attempt to extract random strings from quantum sources where the source is under the control of an adversary (but with known dimension) while the measurement devices are fully characterized.  This represents a middle-ground between fully-trusted and full-device independence, allowing for fast bit-generation rates with current-day technology, while also providing a strong security guarantee.  In this paper we analyze a SI-QRNG protocol based on quantum walks and develop a proof of security.  We derive a novel entropic uncertainty relation for this application which is necessary since standard relations actually fail in this case.
\end{abstract}

\section{Introduction}

Random number generation is an important process for a variety of application domains.  Due to the intrinsic randomness of quantum processes, quantum random number generation (QRNG) is an important field of study within quantum information science.
%A trivial QRNG implementation could consist of a single photon source sending a photon through a beamsplitter then towards two photon detectors.  Under ideal conditions, one of the detectors would ``click'' with $50\%$ probability.  Repeating the experiment would produce a random string.  Of course things are rarely, if ever, ideal and the string produced may not be truly uniform due to, for instance, device imperfections.  The story is even more complicated when the random number produced is required for cryptographic applications (e.g., as the seed input to some cryptographic primitive).  There, not only must the random string be uniform, but it must also be independent of any adversary's, potentially quantum, system.
By now, cryptographically secure QRNG protocols are well studied under a variety of security models ranging from the ``fully trusted device'' scenario (whereby all devices used, sources and measurements, are fully characterized) to the ``fully device independent'' scenario (where all devices used are not trusted) \cite{di-qrng1,di-qrng2}.  Clearly from a cryptographic point of view, DI-QRNG protocols are the desirable ideal due to their minimal assumptions needed for security.  However, though experimental progress has been rapidly improving, the bit-rates of such protocols cannot compare to other models \cite{di-qrng-exp,di-qrng-exp-2}.  As a compromise, the \emph{source independent} (SI) model was introduced in \cite{vallone2014quantum} whereby measurement devices are characterized (though not necessarily ideal) whereas the source is under the control of the adversary.  One may envision the source being a quantum server, providing a service to users who wish to distill cryptographically secure random strings without trusting the server (e.g., the server may be adversarial).  The SI model affords fast experimental bit generation rates \cite{si-qrng-fast} (with a recent paper discussing an implementation with a rate over 8Gb/s \cite{si-qrng-fast-new}) along with fascinating potential applications, including the use of sunlight as a source \cite{si-qrng-sun}.  For a survey of QRNG protocols, the reader is referred to \cite{qrng-survey}.  Note that we are actually considering a semi-source independent model where the dimension of the source is known but no other assumptions are made (this is exactly the model introduced in \cite{vallone2014quantum}).

Outside of QRNG's, quantum walks (QW), the quantum analogue of classical random walks, are a highly important process in quantum computation \cite{farhi1998quantum,childs2003exponential,childs2009universal,lovett2010universal} and, recently, in quantum cryptography \cite{rohde2012quantum,vlachou2015quantum,vlachou2018quantum,srikara2020quantum}.  Recently, a QW-based random number generation protocol was analyzed in \cite{QW-QRNG}, though a rigorous security analysis was not done.  In this paper, we revisit that protocol, minimally changing it to be a SI-QRNG protocol, and prove its security.  To our knowledge, this is the first SI-QRNG protocol with provable composable security based on quantum walks.  We note that the security analysis of this protocol is not trivial.  Due to certain simplifications we make to allow for an easier potential experimental implementation, prior tools are not immediately applicable (though we do not consider experimental concerns in this work, we keep them in mind when developing the protocol). In this work we develop an alternative entropic uncertainty relation which may also hold applications in other quantum cryptographic protocols.

Naturally, QW's are random processes and, so, at first glance designing and proving secure, a QW-QRNG protocol seems a trivial task.  Indeed, the following protocol is a trivial solution to the problem with an ``easy'' (using modern information theoretic tools) security proof in the SI model:
%\begin{enumerate}
(1) First, a source prepares a state $\ket{\psi_{0,0}}^{\otimes (n+m)}$ where $\ket{\psi_{0,0}}$ is some quantum walker state.  While we discuss this in detail later, for now it suffices to consider $\ket{\psi_{0,0}} = W\ket{0,0}$ where $W$ is a unitary operator and $\ket{0,0}$ lives in some Hilbert space of dimension $2P$.  This state is sent to Alice.
(2) Second, Alice chooses a random subset of size $m$ and measures the systems indexed by this subset in the ``quantum walk basis'', namely the orthonormal basis $\{W\ket{0,0}, W\ket{0,1},\cdots, W\ket{1,P-1}\}$.  Ideally, this measurement should always produce the zeroth state of this basis.  The remaining $n$ walker systems are measured in the computational basis $\{\ket{0,0}, \ket{0,1}, \cdots, \ket{1,P-1}\}$.  The first outcome is used to test the fidelity of the received state while the second is used as a raw-random string.  This string is then further processed through a privacy amplification process, the output of which is the final cryptographic random string.
%\end{enumerate}

Indeed this protocol can be proven secure in a very straight-forward manner using entropic uncertainty \cite{berta2010uncertainty,tomamichel2011uncertainty,ent-survey}.  However, there are two complications with the protocol itself.  First, it would require the ability for Alice to perform a full basis measurement in the quantum-walk basis (namely, she would need to distinguish all states of the form $W\ket{c,x}$).  This might require complex optics to do experimentally.  Second, for the randomness generation measurement, she needs to be able to perform a measurement in the full coin and position basis, namely a measurement that can distinguish all states of the form $\ket{c,x}$.  Our goal is to analyze a far simpler protocol, building off of the one from \cite{QW-QRNG}.  The protocol will only require Alice to be able to distinguish a single walker state, namely $W\ket{0,0}$ from any other; and, second, she need only perform a measurement of the position of the walk for randomness, and she need not also determine the state of the coin itself.  The second restriction is identical to the protocol in \cite{QW-QRNG} though, since they did not consider the source independent model, they did not require any other test.  We add only this minimal test ability, namely the ability to distinguish a single quantum walk state from the $2P-1$ others in the walk basis, to ensure a cryptographically secure protocol.

Interestingly, standard entropic uncertainty relations of the form \cite{tomamichel2011uncertainty}:
\begin{equation}\label{eq:ent}
\Hmin^\epsilon(A|E) + \Hmax^\epsilon(A|B) \ge -\log\max_{x,y}||\sqrt{M_x}\sqrt{N_y}||^2_{op},
\end{equation}
where $\{M_x\}$ and $\{N_y\}$ are the two POVMs used in the protocol, are not applicable and can only yield the trivial bound.  Thus a new approach is required to analyze this QW-QRNG protocol.  We develop the approach in this paper using a technique of quantum sampling as introduced by Bouman and Fehr in \cite{sampling} and used by us recently to develop novel \emph{sampling-based entropic uncertainty relations} \cite{krawec2019quantum,krawec2020new}.  In fact, our proof is similar, though with suitable modifications needed for this scenario and, since the result does not follow immediately from our previous analysis, it is necessary to state here.  

We make two primary contributions in this paper.  First, we analyze for the first time, a QW-QRNG protocol introduced in \cite{QW-QRNG} from a cryptographic perspective.  We adapt the protocol sufficiently, and minimally, so as to produce a secure system and prove it is secure in the SI model.  This represents, to our knowledge, the first QRNG protocol based on quantum walks in the SI model of security and shows even greater application of quantum walks to other cryptographic primitives.  Second, we develop a proof of security to handle this scenario when standard approaches are not immediately applicable.  Our security method may also be applicable to other protocols of this nature where standard relations such as Equation \ref{eq:ent} cannot be used directly.  Our proof utilizes the method of quantum sampling by Bouman and Fehr \cite{sampling}, augmented with techniques we developed in \cite{krawec2019quantum,krawec2020new} for entropic uncertainty, showing even more potential applications of these methods to complex security analyses.  We actually think this second contribution the more significant as it shows how this framework of quantum sampling may be used to tackle cryptographic problems that standard methods would fail to analyze successfully, thus opening the door to a potential wider range of applications.

\section{Notation and Definitions}

We now introduce some basic definitions and notation that we will use throughout this paper.  By $\alphabet_d$ we mean a $d$-dimensional alphabet, namely $\alphabet_d = \{0, 1, \cdots, d-1\}$.  Given a word $q \in \alphabet_d^N$ and some subset $t \subset \{1, 2, \cdots, N\}$, we write $q_t$ to mean the substring of $q$ indexed by $t$ (i.e., those characters in $q$ indexed by $i \in t$).  We write $q_{-t}$ to mean the substring indexed by the complement of $t$.  The \emph{Hamming Weight} of $q$ is denoted $wt(q) = |\{i \st q_i \ne 0\}|$ while the \emph{relative Hamming weight} is denoted $w(q) = wt(q)/|q|$.

A \emph{density operator} is a Hermitian positive semi-definite operator of unit trace acting on some Hilbert space $\mathcal{H}$.  Given a pure quantum state $\ket{\psi} \in \mathcal{H}$ we write $\kb{\psi}$ to mean $\ket{\psi}\bra{\psi}$.

The Shannon entropy of a random variable $X$ is denoted by $H(X)$ while the $d$-ary entropy function is denoted $h_d(x)$.  This function is defined to be $h_d(x) = x\log_d(d-1) - x\log_dx - (1-x)\log_d(1-x)$.  We also define the \emph{extended $d$-ary entropy function} to be $\Hextd_d(x)$ which equals $h_d(x)$ for all $x \in [0, 1-1/d]$ but is $0$ for all $x < 0$ and is $1$ for all $x > 1-1/d$.

Let $\rho_{AE}$ be a quantum state (density operator) acting on some Hilbert space $\mathcal{H}_A\otimes\mathcal{H}_E$.  The \emph{conditional quantum min entropy} \cite{renner2008security} is defined to be:
$\Hmin(A|E)_\rho = \sup_{\sigma_E}\max(\lambda\in\mathbb{R} \st 2^{-\lambda}I_A\otimes\sigma_E - \rho_{AE} \ge 0),$
where $I_A$ is the identity operator on $\mathcal{H}_A$.  Note that if the $E$ system is trivial and the $A$ portion is classical (namely $\rho_A = \sum_xp_x\kb{x}$) then it is easy to show that $\Hmin(A) = -\log \max_xp_x$.  If the $E$ portion is classical, namely $\rho_{AE} = \sum_ep_e\rho_A^e\otimes\kb{e}$, then it can be shown that:
\begin{equation}\label{eq:cl-ent}
\Hmin(A|E)_\rho \ge \min_e\Hmin(A)_{\rho^e}.
\end{equation}
Finally, the \emph{smooth conditional min entropy} is defined to be \cite{renner2008security}:
$\Hmin^\epsilon(A|E)_\rho = \sup_{\sigma\in\Gamma_\epsilon(\rho)}\Hmin(A|E)_\sigma,$
with:
$\Gamma_\epsilon(\rho) = \{\sigma \st \trd{\sigma-\rho} \le \epsilon\}.$
Here, $\trd{X}$ is the trace distance of operator $X$.

%Quantum min entropy has many applications and, in particular, can be used to determine how many uniform independent random bits may be extracted from a classical quantum state (cq-state) $\rho_{AE}$.  In particular,

Given a classical-quantum state $\rho_{AE}$, let $\sigma_{KE}$ be the result of a \emph{privacy amplification} process on the $A$ register of this state.  Namely, a process of mapping the $A$ register through a randomly chosen two-universal hash function.  If the output of this hash function is $\ell$ bits long, then it was shown in \cite{renner2008security} that:
\begin{equation}\label{eq:PA}
\trd{\sigma_{KE} - I_K/2^\ell\otimes\sigma_E} \le 2^{-\frac{1}{2}(\Hmin^\epsilon(A|E)_\rho - \ell)} + 2\epsilon.
\end{equation}

\subsection{Quantum Random Walks}

In this work we will consider discrete-time quantum walks on a cycle graph \cite{aharonov2001quantum}.  Such a process involves a Hilbert space $\mathcal{H}_W = \mathcal{H}_C\otimes\mathcal{H}_P$ where $\mathcal{H}_C$ is the two-dimensional \emph{coin space} and $\mathcal{H}_P$ is the $P$-dimensional \emph{position space}.  The walk begins with the walker at some initial position $\ket{c,x}$ (e.g., $\ket{0,0}$) from which a \emph{walk operator} is applied $T$ times.  The walk operator first applies a unitary operator on the coin space (for us, we only consider the Hadamard operator here, though other possibilities exist of course).  Following this a \emph{shift operator} is applied $S$ which maps $\ket{0,x}\mapsto\ket{0,x+1}$ and $\ket{1,x}\mapsto\ket{1,x-1}$ where all arithmetic in the position space is done modulo $P$.  Let $W = S\cdot (H\otimes I_P)$ be the walk operator; then, after $T$ steps, the walker evolves to state $W^T\ket{c,x}$.  Generally, at this point, a measurement may be done on the position space causing a collapse at one of the $P$ spots.

Later, we will denote by $\ket{w_{c,x}}$ to mean the evolved state $W^T\ket{c,x}$.  We will also use $\ket{w_i}$ when appropriate, using the natural relationship of tuples $(c,x)$ to integers $i$, with $(0,0)$ being the first index $i=0$.  Finally, given a walk state $\ket{w_{c,x}}$ we use the notation $Pr_W(\ket{w_{c,x}}\rightarrow z)$ to denote the probability that the walker is observed at position $z$ after measurement.  Namely, $Pr_W(\ket{w_{c,x}}\rightarrow z) = \braket{w_{c,x}|I_C\otimes\kb{z}|w_{c,x}}.$
Finally, we denote by $\gamma$ to be the maximal positional probability of the walk which starts at $\ket{0,0}$, namely:
\begin{equation}\label{eq:gamma}
\gamma = \max_zPr_W(\ket{w_{0,0}}\rightarrow z).
\end{equation}
Obviously, this is a function of the walk parameters (the operation $W$ along with the number of steps $T$).

\subsection{Quantum Sampling}

In \cite{sampling}, Bouman and Fehr discovered a fascinating connection linking classical sampling strategies with quantum ones, even when the quantum state is entangled with an environment system (e.g., an adversary).  Here we review some of these concepts, however for more details the reader is referred to \cite{sampling}.

Let $q\in\alphabet_d^N$.  A classical sampling strategy is a process of choosing a random subset $t \subset\{1, \cdots, N\}$, observing $q_t$, and estimating the value of some target value of the \emph{unobserved portion}.  Here, as in \cite{sampling}, we consider the target value to be the relative Hamming weight.  One sampling strategy we will employ consists of choosing a subset $t$ of size $m \le N/2$ uniformly at random, observing $q_t$, and outputting $w(q_t)$ as an estimate of the Hamming weight in the unobserved portion.  It was shown in \cite{sampling} that, for $\delta > 0$:
\begin{equation}\label{eq:err-cl}
\epsilon_\delta^{cl} := \max_{q\in\mathcal{A}_d^N}Pr(q \not\in \mathcal{G}_{t,\delta}) \le 2\exp\left(\frac{-\delta^2m(n+m)}{m+n+2}\right),
\end{equation}
where the probability is over all choices of subsets $t$ and $\mathcal{G}_{t,\delta}$ is the set of all ``good'' words for which this sampling strategy is guaranteed to produce a $\delta$-close estimate of the Hamming weight of the unobserved portion, namely:
\[
\mathcal{G}_{t,\delta} = \{q\in\alphabet_{d}^{N} \st |w(q_t) - w(q_{-t})|\le\delta\}.
\]
The value $\epsilon_\delta^{cl}$ is the error probability of the classical sampling strategy (the ``cl'' superscript is used to refer to a classical sampling strategy).

The main result from \cite{sampling} shows how to promote such a classical strategy to a quantum one in a way that the failure probabilities of the quantum strategy are functions of the classical ones.  Fix a basis $\{\ket{0}, \cdots, \ket{d-1}\}$ (the exact choice may be arbitrary but then fixed - later when using this result, we will use the walk basis $\{W^T\ket{c,x}\}_{c,x}$).  Define:
\[
span(\mathcal{G}_{t,\delta}) = span(\ket{i_1i_2\cdots i_N} \st |w(i_t) - w(i_{-t})|\le\delta).
\]
This is the quantum analogue of the ``good set'' of classical words.  In particular, note that if given a state $\ket{\phi}_{AE} \in span(\mathcal{G}_{t,\delta})\otimes \mathcal{H}_E$, then if a measurement in the given basis were performed on those qudits indexed by $t$ leading to outcome $q\in\mathcal{A}_d^m$, it must hold that the remaining state is a superposition of the form:
$\ket{\phi_{t,q}} = \sum_{i\in J}\alpha_i\ket{i,E_i},$
where $J\subset \{i \in \mathcal{A}_d^{N-m} \st |w(i) - w(q)| \le \delta\}$.

The main result from \cite{sampling}, reworded for our application here, was to prove the following theorem:
\begin{theorem}\label{thm:sample}
(Modified from \cite{sampling}): Let $\delta > 0$.  Given the above classical sampling strategy and an arbitrary quantum state $\ket{\psi}_{AE}$, there exists a collection of ``ideal states'' $\{\ket{\phi^t}\}_{t}$, indexed over all possible subsets the sampling strategy may choose, such that each $\ket{\phi^t} \in span(\mathcal{G}_{t,\delta})\otimes\mathcal{H}_E$ and:
\begin{equation}\label{eq:ideal}
\frac{1}{2}\trd{\frac{1}{T}\sum_t\kb{t}\otimes\kb{\psi} - \frac{1}{T}\sum_t\kb{t}\otimes\kb{\phi^t}} \le \sqrt{\epsilon_\delta^{cl}}.
\end{equation}
where $T = {N \choose m}$ and the sum is over all subsets of size $m$.
\end{theorem}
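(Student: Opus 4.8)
The plan is to construct the ideal states explicitly by projection onto the good subspace, and then to use the fact that the global state is classically indexed by the subset $t$ so that its trace distance splits additively over subsets. Concretely, for each subset $t$ let $P_t$ be the orthogonal projector onto $span(\mathcal{G}_{t,\delta})$, acting on the $A$ register and tensored with the identity on $\mathcal{H}_E$; since goodness is a property of individual basis words, $P_t = \sum_{i\in\mathcal{G}_{t,\delta}}\kb{i}\otimes I_E$ is diagonal in the fixed basis. I would then set $\ket{\phi^t} = P_t\ket{\psi}/\trd{P_t\ket{\psi}}$ whenever $P_t\ket{\psi}\neq 0$, and take $\ket{\phi^t}$ to be any fixed unit vector of $span(\mathcal{G}_{t,\delta})\otimes\mathcal{H}_E$ otherwise. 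By construction each $\ket{\phi^t}$ lies in the required subspace.

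First I would observe that both operators inside the norm in Equation \ref{eq:ideal} are block diagonal with respect to the orthonormal register $\{\ket{t}\}$, so their difference is an orthogonal direct sum of the per-subset differences $\frac{1}{T}(\kb{\psi}-\kb{\phi^t})$, whence the left-hand side equals $\frac{1}{T}\sum_t\frac{1}{2}\trd{\kb{\psi}-\kb{\phi^t}}$. For two pure states the per-block quantity has the closed form $\sqrt{1-|\braket{\psi|\phi^t}|^2}$, and because $\ket{\phi^t}$ is the normalized projection of $\ket{\psi}$ onto the range of $P_t$ one computes $|\braket{\psi|\phi^t}|^2 = \trd{P_t\ket{\psi}}^2$, so each block term is exactly $\trd{(I-P_t)\ket{\psi}}$, the weight of $\ket{\psi}$ on the ``bad'' subspace (in the degenerate block this weight equals $1$ and the block bound $\le 1$ is automatic). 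Applying concavity of the square root (equivalently Cauchy--Schwarz) to the resulting average then gives $\frac{1}{T}\sum_t\trd{(I-P_t)\ket{\psi}}\le\sqrt{\frac{1}{T}\sum_t\trd{(I-P_t)\ket{\psi}}^2}$.

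The hard part, carrying essentially all the content, will be bounding the averaged bad-subspace weight $\frac{1}{T}\sum_t\trd{(I-P_t)\ket{\psi}}^2$ by the classical error $\epsilon_\delta^{cl}$, and this is exactly where the diagonality of $I-P_t$ is decisive. Writing $\ket{\psi} = \sum_{i\in\alphabet_d^N}\ket{i}_A\ket{\beta_i}_E$ with $\sum_i\trd{\beta_i}^2 = 1$, diagonality gives $\trd{(I-P_t)\ket{\psi}}^2 = \sum_{i\notin\mathcal{G}_{t,\delta}}\trd{\beta_i}^2$, and interchanging the two finite sums yields $\frac{1}{T}\sum_t\trd{(I-P_t)\ket{\psi}}^2 = \sum_i\trd{\beta_i}^2\,Pr_t(i\notin\mathcal{G}_{t,\delta})$, where $Pr_t(i\notin\mathcal{G}_{t,\delta})$ is precisely the per-word failure probability of the classical strategy. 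Each such probability is at most $\epsilon_\delta^{cl}$ by its definition as a maximum over words, so the convex combination is at most $\epsilon_\delta^{cl}$, and combining with the previous step delivers the claimed bound $\sqrt{\epsilon_\delta^{cl}}$. The only genuine subtlety is the normalization bookkeeping in the degenerate blocks, which as noted above is harmless.
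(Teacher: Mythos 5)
The paper gives no proof of this theorem at all---it is imported directly from Bouman and Fehr \cite{sampling}---and your argument (taking $\ket{\phi^t}$ to be the normalized projection of $\ket{\psi}$ onto $span(\mathcal{G}_{t,\delta})\otimes\mathcal{H}_E$, using block-diagonality over the $t$ register, the pure-state trace-distance formula, Jensen's inequality, and the exchange of the sums over $t$ and $i$ to reduce everything to the classical error probability $\epsilon_\delta^{cl}$) is precisely the standard proof given in that reference. It is correct as written, including the bookkeeping for the degenerate blocks where $P_t\ket{\psi}=0$, so there is nothing to add.
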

Note that the result requires a fixed basis of reference (from which to define $\mathcal{G}_{t,\delta}$).

\section{The Protocol}

We consider a QW-QRNG protocol introduced in \cite{QW-QRNG}.  That protocol was not analyzed rigorously from a cryptographic standpoint and, in fact, would not be secure in the SI model.  We modify that protocol, adding a minimal testing ability for Alice, and later show it is secure in the SI model of security.  The protocol operates as follows:
%\begin{enumerate}
$ $\newline\textbf{Public Parameters:} The quantum walk setting, namely the dimension of the position space $P$ (defining the overall Hilbert space of one walker $\mathcal{H}_W = \mathcal{H}_C\otimes\mathcal{H}_P$), the walk operator $W$, and the number of steps to evolve by, $T$.
$ $\newline\textbf{Source:} A source, potentially adversarial, produces a quantum state $\ket{\psi_0} \in \mathcal{H}_A\otimes\mathcal{H}_E$, where $\mathcal{H}_A \cong \mathcal{H}_W^{\otimes N}$.  If the source is honest, the state prepared should be of the form:
  \[
  \ket{\psi_0} = \ket{w_0}^{\otimes N}\otimes\ket{0}_E,
  \]
  namely, $N$ copies of the walker state $\ket{w_0} = \ket{w_{0,0}} = W^T\ket{0,0}$ unentangled with Eve.
$ $\newline\textbf{User:} Alice chooses a random subset $t$ of size $m$ and measures those walker states using POVM $\W = \{\kb{w_0}, I-\kb{w_0}\} = \{W_0, W_1\}$ resulting in outcome $q \in \{0,1\}^m$ (equivalently, she reverses the quantum walk and observes whether the initial state was $\ket{0,0}$ or anything else).  The remaining states she measures using POVM $\Z = \{I_C\otimes\kb{j}\}_{j=0}^{P-1} = \{Z_0, Z_1, \cdots, Z_{P-1}\}$ resulting in outcome $r \in \alphabet_P^n$, where $n = N-m$.
$ $\newline\textbf{Postprocessing:} Finally, Alice applies privacy amplification to $r$, producing a final random string of size $\ell$.  As proven in \cite{frauchiger2013true}, the hash function used for privacy amplification need only be chosen randomly once and then reused for each run of the protocol for a QRNG protocol of this nature.
%\end{enumerate}

The goal of this protocol is to ensure that, for a given $\epsilon_{PA}$ set by the user, after privacy amplification the resulting string is $\epsilon_{PA}$ close to an ideal random string, uniformly generated and independent of any adversary system.  Using Equation \ref{eq:PA}, this involves finding a bound on the quantum min-entropy.  Note that, for the given POVMs, it is straight-forward to check that $\max_{x,y}||\sqrt{W_x}\sqrt{Z_y}||^2_{op} = 1$ and so Equation \ref{eq:ent} only yields the trivial bound on the min entropy.  Thus an alternative approach is required which we develop in the next section.

\subsection{Security Analysis}

To prove security, we require a bound on the quantum min entropy from which, using Equation \ref{eq:PA}, we may compute the number of random bits $\ell$ which may be extracted from $N$ quantum walk states (prepared by an adversary).  We assume the adversary is allowed to create any initial state, possibly entangled with her ancilla, however as in \cite{vallone2014quantum}, the dimension of the system sent to Alice is known; in our case it is $(2P)^N$, namely, $N$ quantum walker states, each of dimension $2P$.  We do not assume anything else about this state (for instance, each of the $N$ walkers may be in different states).  Such a scenario also models natural noise and an honest source - considering an adversarial source is more general.  Finally, we assume that Alice's measurement devices are fully characterized.

\begin{theorem}\label{thm:main}
Let $\epsilon > 0$.  After executing the above QW-QRNG protocol and observing outcome $q$ during the test stage (namely, after measuring using $\W$), it holds that, except with probability at most $\epsilon^{1/3}$ (where the probability here is over the choice of sample subset and observation $q$), the protocol outputs a final secret string of size:
\[
\ell=-\eta_q\log_2\gamma - n\cdot\frac{\Hextd_{2P}(w(q) + \delta)}{\log_{2P} (2)} - 2\log_2\frac{1}{\epsilon} - \log_2{N \choose m},
\]
which is $(5\epsilon + 4\epsilon^{1/3})$-close to an ideal random string (i.e., one that is uniformly generated and independent of any adversary system as in Equation \ref{eq:PA}).  Above, $\eta_q = (N-m)(1-w(q)-\delta)$ and:
\begin{equation}\label{eq:delta}
  \delta = \sqrt{\frac{(N+2)\ln(2/\epsilon^2)}{m\cdot N}}.
\end{equation}
\end{theorem}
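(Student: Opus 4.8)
The plan is to reduce the real protocol, which uses the problematic POVM pair $\W$ and $\Z$, to an idealized scenario where quantum sampling (Theorem~\ref{thm:sample}) lets me control the adversary's knowledge, and then to apply a min-entropy bound followed by privacy amplification via Equation~\ref{eq:PA}. The central difficulty is that Equation~\ref{eq:ent} fails here (the operator-norm overlap is $1$), so I cannot use standard entropic uncertainty; instead the quantum sampling framework must play the role of the uncertainty relation, translating the outcome $q$ of the test measurement $\W$ into a guarantee about the structure of the post-measurement state in the walk basis.

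First I would set up the classical sampling strategy of Section~II-C with $d = 2P$ and the walk basis $\{W^T\ket{c,x}\}_{c,x}$ as the fixed reference basis, choosing $\delta$ as in Equation~\ref{eq:delta} so that $\epsilon_\delta^{cl} \le \epsilon^2$ by Equation~\ref{eq:err-cl}. Applying Theorem~\ref{thm:sample} to the adversary's state $\ket{\psi_0}$ produces ideal states $\ket{\phi^t} \in span(\mathcal{G}_{t,\delta}) \otimes \mathcal{H}_E$ that are $\sqrt{\epsilon_\delta^{cl}} \le \epsilon$-close on average over subset choice. I would then argue that it suffices to prove the entropy bound for these ideal states, since trace distance is non-increasing under the remaining quantum operations (the $\Z$ measurement and privacy amplification), and the $\epsilon$-closeness will be absorbed into the final $(5\epsilon + 4\epsilon^{1/3})$ bound. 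The key observation is that the test POVM $\W = \{\kb{w_0}, I - \kb{w_0}\}$ exactly implements a single coordinate of the walk-basis measurement collapsed onto "$\ket{w_0}$ versus not-$\ket{w_0}$," so observing outcome $q$ with relative Hamming weight $w(q)$ is precisely observing the sampled substring in the sense of the sampling strategy, and by the definition of $span(\mathcal{G}_{t,\delta})$ the unmeasured register collapses to a superposition supported on words $i$ with $|w(i) - w(q)| \le \delta$.

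Next I would bound the min-entropy of the randomness string $r$ obtained by measuring the $n$ unmeasured walkers with $\Z$. For a single walker in the state $\ket{w_0}$ the positional outcome distribution has max probability $\gamma$ (Equation~\ref{eq:gamma}), so each coordinate known to be $\ket{w_0}$ contributes $-\log_2\gamma$ bits; there are at least $\eta_q = (N-m)(1 - w(q) - \delta)$ such coordinates guaranteed by the Hamming-weight constraint. The remaining coordinates, which may be corrupted (non-$\ket{w_0}$ walk states), are handled pessimistically: their contribution is controlled by counting the number of words consistent with the weight bound, which is where the extended $d$-ary entropy $\Hextd_{2P}(w(q) + \delta)$ enters, giving the subtractive term $n \cdot \Hextd_{2P}(w(q)+\delta)/\log_{2P}(2)$ (the denominator converting from base-$2P$ to base-$2$ entropy). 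Combining these via the classical-to-quantum min-entropy relation in Equation~\ref{eq:cl-ent}, applied conditionally on the test outcome $q$, yields a lower bound on $\Hmin^\epsilon(R|E)$ for the smoothed ideal state.

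Finally I would feed this min-entropy bound into the privacy amplification inequality, Equation~\ref{eq:PA}: choosing the output length $\ell$ equal to the stated expression ensures the first term $2^{-\frac12(\Hmin^\epsilon - \ell)}$ is at most $\epsilon$ (this is the role of the $-2\log_2(1/\epsilon)$ term), while the $-\log_2\binom{N}{m}$ term accounts for the subset choice being public. Collecting all error contributions — the sampling error $\epsilon$, the smoothing parameters $2\epsilon$, the privacy amplification slack, and the $\epsilon^{1/3}$ failure probability coming from a Markov-type argument that converts the average-case trace-distance bound of Theorem~\ref{thm:sample} into a statement holding except with probability $\epsilon^{1/3}$ over the actual subset and outcome — gives the claimed $(5\epsilon + 4\epsilon^{1/3})$ closeness. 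I expect the main obstacle to be the careful accounting in this last averaging step: Theorem~\ref{thm:sample} only bounds the distance \emph{averaged} over subsets, so I must invoke a Markov inequality (contributing the $\epsilon^{1/3}$ terms) to assert that for a typical subset and a typical observed $q$ the post-measurement state is genuinely close to an ideal state in $span(\mathcal{G}_{t,\delta})$, and simultaneously ensure that the entropy bound degrades gracefully enough that the stated $\ell$ remains extractable.
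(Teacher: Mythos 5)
Your proposal follows essentially the same route as the paper's proof: quantum sampling (Theorem~\ref{thm:sample}) in the walk basis to reduce to ideal states in $span(\mathcal{G}_{t,\delta})$, a per-coordinate bound of $-\log_2\gamma$ on the $\eta_q$ positions guaranteed to carry $\ket{w_0}$, a Hamming-ball count yielding the subtractive $\Hextd_{2P}$ term, a Markov-type argument converting the average-case closeness into the $\epsilon^{1/3}$ failure probability, and privacy amplification via Equation~\ref{eq:PA}. The one step you leave implicit is the mechanism by which the word count enters: the paper invokes a Renner-style lemma asserting that the conditional min-entropy of the coherent superposition $\sum_{i}\alpha_i\ket{w_i}\ket{E_i}$ is at least that of the corresponding dephased mixture minus $\log|J_q^{(k)}|$ (rather than the corrupted coordinates being ``handled pessimistically'' per se), but your accounting of the resulting term and of all error contributions is correct.
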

\begin{proof}
%As discussed, the main security analysis involves finding a bound on the min entropy of the raw-string system conditioned on Eve after the sampling process takes place.  For this, we will use the quantum sampling framework as described earlier, along with a modified two-step proof method we used in (CITE) to develop sampling-based entropic uncertainty relations.
Fix $\epsilon > 0$ and let $\ket{\psi_0}_{AE}$ be the state the adversarial source Eve creates, sending the $A$ portion to Alice.  Using Theorem \ref{thm:sample} (with respect to the reference basis $\{W^T\ket{0,0}, \cdots, W^T\ket{1,P-1}\}$), there exist ideal states $\{\ket{\phi^t}\}$, indexed over all subsets $t\subset\{1, 2, \cdots, N\}$ of size $m$, such that $\ket{\phi^t} \in \text{span}(\ket{w_{i_1}w_{i_2}\cdots w_{i_N}} \st |\hw(i_t) - \hw(i_{-t})| \le \delta) \otimes\mathcal{H}_E$ and Equation \ref{eq:ideal} holds.
%\begin{align}
%  &\frac{1}{2}\trd{\frac{1}{T}\sum_t\kb{t}\otimes\kb{\psi_0} - \frac{1}{T}\sum_t\kb{t}\otimes\kb{\phi^t}} \le \sqrt{\epsilon_\delta^{cl}}.
%\end{align}
(Note we define $\ket{w_0} = \ket{w_{0,0}} = W^T\ket{0,0}$.) From Equation \ref{eq:err-cl}, by setting $\delta$ as in Equation \ref{eq:delta},
we have $\sqrt{\epsilon^{cl}_\delta} = \epsilon$.

We now use a two-step proof method we developed in \cite{krawec2019quantum,krawec2020new} to utilize quantum-sampling for entropic uncertainty relations.  Here, we modify the first step of the proof for this cryptographic application, while the second step remains largely the same.  The first step is to analyze the security of the ideal state $\sigma_{TAE} = \frac{1}{T}\sum_t\kb{t}\otimes\kb{\phi^t}$.  Choosing a subset is equivalent to measuring the $T$ register in $\sigma_{TAE}$ causing the state to collapse to the given ideal state $\ket{\phi^t}$.  At this point, a measurement using $\W$ is made on subset $t$ resulting in some outcome $q$.  The post-measurement state, discarding those systems that were measured, is easily seen to be of the form:
\[
\phi^t_q = \sum_{k\in\alphabet_{2P-1}^{wt(q)}}p_k\underbrace{P\left(\sum_{i\in J_q^{(k)}}\alpha_i\ket{w_i}\ket{E_i}\right)}_{\sigma_{AE}^{(k)}}.
\]
with $P(z) = zz^*$ and $J_q^{(k)} \subset \{i\in\mathcal{A}_P^n \st |w(i) - w(q)|\le\delta\}.$  Recall $n=N-m$.

Let us consider one of the $\sigma_{AE}^{(k)}$ states and perform a measurement using POVM $\Z$ on the remaining $A$ portion.  
To compute this state, we write a single quantum walker $\ket{w_i} \in \mathcal{H}_W$ as:
$  \ket{w_i} = \ket{0}\ket{\phi(0,i)} + \ket{1}\ket{\phi(1,i)},$
where $\ket{\phi(c,i)}$ are (not necessarily normalized) states in $\mathcal{H}_P$.  Using this notation, after some algebra, we find that the post-measurement state, with Alice storing the outcome $z\in\alphabet^n_P$ in a classical register $Z$ and also tracing out the unmeasured coin register is:
\[
\sigma_{ZE}^{(k)} = \sum_{z\in\alphabet_P^n} \kb{z}_Z\sum_{i,j \in J_q^{(k)}}\alpha_i\alpha_j^*\sum_{c\in\{0,1\}^n}x_{z,c,i}x_{z,c,j}^*\otimes\ket{E_i}\bra{E_j}
\]
where given a string $c\in\{0,1\}^n$, $z\in\alphabet_P^n$, and $i\in J_q^{(k)}$, we define $x_{c,z,i}$ as:
$x_{z,c,i} = \prod_\ell\braket{z_\ell|\phi(c_\ell|i_\ell)}.$
To compute the min-entropy of this state, we will consider the following density operator:
\[
\chi_{ZE} = \sum_z \kb{z}\sum_{i\in J_q^{(k)}}|\alpha_i|^2\sum_c|x_{c,z,i}|^2\otimes\kb{E_i}
\]
Using a proof similar to a lemma in \cite{renner2008security} which bounds the min-entropy of a superposition based on the min-entropy of a suitable mixed state, we find that:
\[
\Hmin(Z|E)_{\sigma^{(k)}} \ge \Hmin(Z|E)_\chi - \log|J_q^{(k)}|.
\]
Note that, though the lemma in \cite{renner2008security} is not immediately applicable to the above scenario, the proof is, indeed, identical and so we omit the details for space reasons.

Consider the state $\chi_{ZEI}$ where we append an auxiliary system spanned by orthonormal basis $\ket{i}$:
\[
\chi_{ZEI} = \sum_{i\in J_q^{(k)}}|\alpha_i|^2\underbrace{\left(\sum_z \kb{z}\sum_c|x_{c,z,i}|^2\right)}_{\chi^{(i)}}\otimes\kb{E_i}\otimes\kb{i}
\]
For strings $z \in \alphabet_P^n$ and $i \in \alphabet_{2P}^n$, let $p(z|w_i)$ be the probability that outcome $z$ is observed if measuring the pure, and unentangled state, state $\ket{w_{i_1}w_{i_2}\cdots w_{i_n}}$ using POVM $\Z$.  Simple algebra shows that this is in fact $p(z|w_i) = \sum_c|x_{c,z,i}|^2$.  Thus $\chi^{(i)} = \sum_z\kb{z}p(z|w_i)$.
From Equation \ref{eq:cl-ent} and treating the joint $EI$ register as a single classical register, we have:
\[
\Hmin(Z|E)_\chi \ge \Hmin(Z|EI)_\chi \ge \min_i\Hmin(Z)_{\chi^{(i)}}.
\]
Fix a particular $i \in J_q^{(k)}$ and let $\eta = n-wt(i)$ (namely, $\eta$ is the number of zeros in the string $i$). Then, it is clear that:
$p(z|w_i) \le \max_{x\in\alphabet_P} Pr_W(\ket{w_0} \rightarrow x)^\eta = \gamma^\eta,$
where $\gamma$ was defined in Equation \ref{eq:gamma}.  Indeed, any other $Pr_W(\ket{w_i}\rightarrow z) \le 1$ and so we may consider only the $\ket{w_0}$ terms as contributing to this upper-bound.  From this, it follows that:
%\begin{align*}
$  \Hmin(Z)_{\chi^{(i)}} = -\log \max_z p(z|w_i)\ge -\log\gamma^{n-wt(i)}.$% = -\eta_q\log_2\gamma
%\end{align*}

Now, since $i \in J_q^{(k)}$, we know that $wt(i) \le n(w(q)+\delta)$ and so:
\begin{align*}
\Hmin(Z|E)_\chi &\ge \Hmin(Z|EI)_\chi \ge \min_i\Hmin(Z)_{\chi^{(i)}}\\
&\ge -\log\gamma^{n(1-w(q) - \delta)} = -\eta_q\log\gamma.
\end{align*}

Finally, we note that $|J_q^{(k)}| \le d^{n\Hextd_{2P}(w(q)+\delta)}$ (using the well known bound on the volume of a Hamming ball), we have:
\begin{align*}
&\Hmin(Z|E)_\sigma \ge \min_k \Hmin(Z|E)_{\sigma^{(k)}}\\
&\ge -\eta_q\log_2\gamma - n\cdot\frac{\Hextd_{2P}(w(q) + \delta)}{\log_{2P} (2)}
\end{align*}

Of course, this is the ideal state analysis.  However, we may use a similar technique that we employed in \cite{krawec2019quantum} for translating this ideal analysis to the real case.  Indeed, let $\rho_{ZE}^{t,q}$ be the state of the real system, $\ket{\psi}$, conditioned on the protocol sampling subset $t$ and observing outcome $q$ and let $\sigma_{ZE}^{t,q}$ be the same for the ideal state.  If we define:
$\Delta_{t,q} = \frac{1}{2}\trd{\rho_{ZE}^{t,q} - \sigma_{ZE}^{t,q}},$
then, treating $\Delta_{t,q}$ as a random variable over the choice of $t$ and outcome $q$, it can be shown (see the proof of Theorem 2 in \cite{krawec2019quantum} for explicit details) that except with probability $\epsilon^{1/3}$, it holds that $\Delta_{t,q} \le \epsilon + \epsilon^{1/3}$ where the probability is over the choice of $t$ and outcome $q$.  Thus, by switching to smooth min entropy, we have, except with probability at most $\epsilon^{1/3}$ that $\Hmin^{2\epsilon+2\epsilon^{1/3}}(Z|E)_\rho \ge \Hmin(Z|E)_\sigma$.
%\[
%\Hmin^{2\epsilon+2\epsilon^{1/3}}(Z|E)_\rho \ge  -\eta_q\log_2\gamma - n\cdot\frac{\Hextd_{2P}(w(q) + \delta)}{\log_{2P} (2)}
%\]
Privacy amplification (Equation \ref{eq:PA}, setting the right-hand-side of that equation equal to $\epsilon_{PA} = 5\epsilon+4\epsilon^{1/3}$, namely twice the smoothening parameter plus an additional $\epsilon$), along with the fact that it requires $\log{N \choose m},$ random bits to choose a subset of size $m$, completes the proof.
\end{proof}

%Note that QRNG protocols require some initial seed randomness to choose random subsets (taken into account by the loss of $\log{n+m \choose m}$ bits from each freshly generated random string needed to replenish the pool for next use) \cite{vallone2014quantum}.  Thus, QRNG protocols are actually randomness expansion protocols, taking a small initial seed and expanding it arbitrarily.

%\subsection{Evaluation}
$ $\newline\textbf{Evaluation: }
We evaluate the performance of our protocol under a variety of position dimensions $P$.  Ordinarily, users would run the protocol and observe $q$ directly; however, to simulate its execution, we will assume the noise follows a depolarization channel with parameter $Q$.  We do this only to evaluate our protocol; furthermore, this noise model is a standard one to evaluate on in simulations.  From this, after sampling, Alice will have an expected Hamming weight in her test measurement of $w(q) = Q$.  In our evaluations, we will set $\epsilon = 10^{-36}$ which will imply a failure probability, and an $\epsilon_{PA}$-secure string, both on the order of $10^{-12}$.  We also use a sample size that is the square-root of the total number of signals $N$, namely $m = \sqrt{N}$.  Finally, to evaluate our bit-generation rate, we will require $\gamma$ (Equation \ref{eq:gamma}).  Since the walk settings are chosen by the user, we wrote a program that, for fixed dimension $P$, found the minimum $\gamma$ value over all time settings $T = 1, 2, \cdots, 5000$.  The evaluation of the bit generation rate of this SI-QW-QRNG protocol, using our analysis in Theorem \ref{thm:main}, is shown in Figure \ref{fig:1}.  A comparison to an alternative SI-QRNG protocol from \cite{xu2016experimental} is shown in Figure \ref{fig:2}.  Note that as the dimension of the walker increases, the bit-generation rates, even under high noise levels, increases.  Interestingly, as shown in Figure \ref{fig:2}, depending on the walker dimension, the QW based protocol can sometimes outperform the SI-QRNG protocol from \cite{xu2016experimental} (which is based on mutually unbiased measurements of a highly entangled state).

\begin{figure}
%  \begin{subfigure}[b]{0.2}
  \centering
  \includegraphics[width=0.48\linewidth]{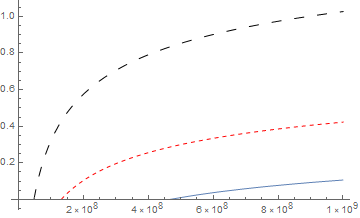}
  \includegraphics[width=0.48\linewidth]{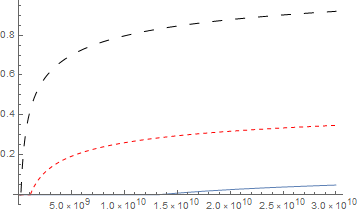}
\caption{Random bit generation rates of the QW-QRNG protocol.  $x$-axis: number of signals sent $N$ (from which $m=\sqrt{N}$ are used for sampling); $y$-axis: random bit-generation rate (namely $\ell/N$ where $\ell$ is computed using Theorem \ref{thm:main}).  Black dashed (top) is $P=51$; red-dashed (middle) is $P=11$; blue solid (lowest) is $P=5$.  Left graph is with $15\%$ noise in the source (namely $w(q) = 0.15$); Right graph has $20\%$ noise.}\label{fig:1}
\end{figure}

\begin{figure}
  \centering
  \includegraphics[width=0.48\linewidth]{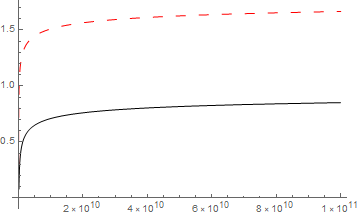}
  \includegraphics[width=0.48\linewidth]{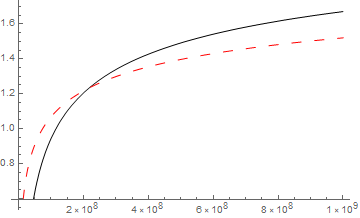}
\caption{Comparing the QW-QRNG protocol's bit generation rate (black-solid) with that of the SI-QRNG protocol in \cite{xu2016experimental} (red-dashed).  Left: $P=5$; Right: $P=51$.  In both cases we assume $10\%$ noise in the signal state.  For the SI-QRNG protocol's evaluation from \cite{xu2016experimental}, we use a dimension of $2P$.}\label{fig:2}
\end{figure}

\section{Closing Remarks}

In this paper, we modified, minimally, a QRNG protocol from \cite{QW-QRNG}, based on quantum walks, to be secure in the semi-source independent (SI) model.  Since standard entropic uncertainty relations cannot be directly applied as discussed, we develop an alternative entropic uncertainty relation for this protocol, showing it is secure in the SI model.  Our methods may potentially find applications in other difficult to analyze quantum cryptographic protocols.   There are important reasons for studying this QW-based protocol.  For instance, it is important to harness alternative quantum processes such as quantum-walk states, as it is still unclear what future experimental developments will yield; being able to utilize QW states may be highly relevant, especially since they are also useful for other tasks, computational and cryptographic, as discussed earlier.   Second, it is interesting from a theoretical stand-point.  Many exciting open problems remain, in particular a more rigorous evaluation of the performance of this QW-QRNG protocol for different walk parameters (such as alternative coin operators) or alternative models (such as history-dependent walks \cite{rohde2013quantum,mcgettrick2009one,krawec2015history,brun2003quantum}) would be very exciting.

\balance
% Generated by IEEEtran.bst, version: 1.12 (2007/01/11)

\end{document}